\DeclareMathAlphabet{\mathcal}{OMS}{cmsy}{m}{n}
\SetMathAlphabet{\mathcal}{bold}{OMS}{cmsy}{b}{n}
\newcommand\bigO{\mathcal{O}}
\newcommand\givespace{\vspace{0pt}}
\crefname{section}{Sect.}{Sect.}
\Crefname{section}{Section}{Sections}
\crefname{listing}{List.}{List.}
\crefname{listing}{Listing}{Listings}
\Crefname{listing}{Listing}{Listings}
\crefname{lstlisting}{Listing}{Listings}
\Crefname{lstlisting}{Listing}{Listings}
\begin{document}

\title{PermRust: A Token-based\\Permission System for Rust}
\author{Lukas~Gehring\inst{1} \and%
Sebastian~Rehms\inst{1} \and%
Florian~Tschorsch\inst{1}}
\institute{Technische Universität Dresden, 01062 Dresden, Germany
\email{\{lukas.gehring,sebastian.rehms,florian.tschorsch\}@tu-dresden.de}}
\maketitle
\begin{abstract}
Permission systems which restrict access to system resources are a well-established technology in operating systems, especially for smartphones.
However, as such systems are implemented in the operating system they can at most manage access on the process-level.
Since moderns software often (re)uses code from third-parties libraries, a permission system for libraries can be desirable to enhance security.
In this short-paper, we adapt concepts from capability systems
building a novel theoretical foundation for permission system at the level of the programming language.
This leads to \emph{PermRust}, a token-based permission system for the Rust programming language as a zero cost abstraction on top of its type-system.
With it access to system resources can be managed per library.
\end{abstract}

\begin{keywords}
Capability Based Security \and Rust \and Supply Chain Attacks
\end{keywords}

\section{Introduction}\label{Introduction}
Managing access to sensitive resources is a crucial requirement to secure systems.
This holds especially when executing code from third-parties.
For example, modern operating systems (like Android and iOS) implement permission systems to restrict apps from using resources (like camera, filesystem, GPS) without a permission from the user.
This partially realizes the \emph{principle of least privilege}, stating that a subject should only be able to access the resources that are necessary for its legitimate purpose.
Such systems are implemented at OS-level managing resources per process.
This leaves an open space for systems which restricts more granular subjects like libraries.
Since in modern software development, developers build their apps on top of various third-party libraries, distributed in open repositories (like \verb|cargo| or \verb|npm|), restricting access to resources for these libraries can help mitigate against attacks on (or from) those libraries (so-called \emph{supply-chain-attacks}).

The Pony programming language~\cite{Pony} utilizes such a permission system,
known as object capabilities.
Pony however has a low adoption rate and steep learning curve, because it uses a complex system of reference capabilities to ensure memory safety.
While the scientific literature discusses the theoretical foundations of the latter in detail~\cite{pony_refcapa}, its permission system receives less attention.

In this short paper, we introduce \emph{PermRust}, a token-based permission system for the Rust programming language.
PermRust allows developers to easily restrict access to system resources per library.
As it utilizes Rust's type system to enforce the restrictions, PermRust only increases the compile time and has no run time costs.
The system can be used to manage the permissions of third-party libraries, lowering the risk of supply chain attacks.

We introduce PermRust as a concept to foster discussions on permission systems in modern programming languages.
We provide a proof-of-concept implementation, available in our accompanying Git repository~\cite{gitrepo}.
Please note that we consider an evaluation of PermRust as future work and limit this paper to a discussion of the approach's limitations.
Our contributions are as follows:
\begin{itemize}
	\item We lay a novel theoretical foundation to restrict system resources per function in modern software development.
		For this, we leverage established access control concepts such as access matrices and capability systems~(\cref{sec:methodology}).
	\item We employ this method to develop PermRust as a Rust-based proof of concept~(\cref{implementing}).
\end{itemize}
We conclude the paper with a discussion on limitations of our approach~(\cref{chap:evaluation}) and outline areas of future work.%

\section{Related Work}\label{relatedwork}

In this paper, we sketch a framework for app-developers which mitigates supply chain attacks by restricting the access of third-party libraries to system resources.
A lot of related work in this area has been done around the Android operating system~\cite{LibCage,Flexdroid,Compac}.
Other techniques to restrict I/O access on OS-level are SELinux~\cite{SELinux}, AppArmor~\cite{AppArmor}, seccomp, and pledge~\cite{seccomp}.
These techniques are blind to program components and do not operate on a per-library level.
Existing languages, which implement such capabilities to restrict access to resources, are Joule~\cite{Joule}, E~\cite{E}, and Pony~\cite{Pony}.
These languages, however, are not designed with supply chain attacks in mind and are constrained to a single programming paradigm, i.e., Dataflow, object-oriented, and actor-based programming.
In addition, they come with runtime penalties (e.g., Joule, E) or forbid certain functionalities (e.g., no global variables in Pony).
PermRust brings permission systems to a popular programming language with a sizable ecosystem.
A lot of work has been put into techniques to mitigate supply chain attacks,
including statistical program analysis~\cite{Garrett,Pfretzschner,duan_towards_2021} or precaution against human errors~\cite{Zimmermann}.
\section{Access Management}\label{sec:methodology}
In this section, we develop the theoretical foundation for managing access to system resources per function.
In particular, we introduce and adapt the well-established Access Matrix Model for our use-case and argue that the model can best be instantiated with a capability based approach.

\subsection{Adapting the Access Matrix Model}\label{terms?}
The \emph{access matrix model}~\cite{lampson1971protection} is a simple and well-established computer security model to formalize a security policy.
For its definition, we use the formalization by~\cite[p. 264]{eckert2023sicherheit}.
\begin{definition}[Access Matrix Model]
Let $S_t$ be a finite set of \emph{subjects} and $O_t$ a finite set of \emph{objects} at time~$t$.
Let $R$ be a finite set of access rights.
The matrix~$M_t \in R^{|S_t | \times |O_t |}$ is the \emph{access matrix} over $S, O$, and $R$ at time~$t$.
$M_t (s,o) = \{r_1 , \dots r_n \} $ is the set of access rights a subject~$s$ has to the object~$o$ at time~$t$.
\end{definition}

As described our security system should manage the access of libraries to system I/O.
Hence, the sets of subjects correspond to the functions of a program and the sets of objects correspond to the systems I/O-interfaces.
High-level languages provide library functions as entry points for I/O operations.
We model the set~$O$ as this specific subset of the program functions.
With that, we only need to consider one type of access right: $R = \{call\}$.
E.g a function $f$ having read-access is modelled by having the right to \emph{call} the functions $f'$ of the standard library, which implement read-operations ($M(f,f')= \{call\}$).
The following definitions formalize security policies which lead to an algorithm that checks if a program complies with a given policy.
\begin{definition}[Permission Matrix for a Program]\label{def:permMatrix}
Let $F$ be the set of functions of a program~$\Pi$.
For $O \subset F$, an access matrix $M$ over $F, O$ and \{$call$\} is a \emph{permission matrix over $\Pi$}.
\end{definition}
In a program, each function has a caller that (per definition of the permission matrix) also has permissions.
Because it is not immediately clear how the permissions of the caller trickle-down to the functions further up the call stack we need to define some properties connecting the access rights of functions.
We now define a subtree of the abstract syntax tree of a program which enables us to formally talk about the caller-callee relations of functions.
For the construction we use edge contraction a common operation in graph theory, which removes edges $e = (v,w)$ from a graph and merges the two vertices $v$ and $w$.
\begin{definition}[Abstract Function Tree]
Let $T=(V,E)$ be the abstract syntax tree of $\Pi$ with $F \subseteq V$.
Let $E' = \{(v,w) \in E | w \notin F\}$, be the set of all edges not ending on a function vertex.
The tree $T' = T / E'$, which results from $T$ by contracting all edges in $E'$ and naming them after the target of the original edge, is called the \emph{abstract function tree~(AFT) of $\Pi$}.
\end{definition}
\begin{definition}[Permission Respecting]
A tree $T=(V, E)$ is \emph{permission respecting regarding an access matrix M} if
$\forall (p,o) \in E, o \in O: M(p,o) = \{call\}$.
A program $\Pi$ is \emph{permission respecting w.r.t. an access matrix $M$}, if its AFT is permission respecting w.r.t. $M$.
\end{definition}
\begin{definition}[Privilege Escalation Free]
A tree $T=(V, E)$ is \emph{privilege escalation free over M} if
$\forall (p,c) \in E, \forall o \in O: M(p,o) \supseteq M(c,o)$.
A program $\Pi$ is \emph{privilege escalation free over an access matrix $M$}, if its AFT is privilege escalation free over $M$.
\end{definition}
Note that when $M$ is a permission matrix for a program $M(s,o)$ can only be  $\{call\}$ or $\emptyset$.
This simplifies the condition of the definition to 
$\forall (p,c) \in E, \forall o \in O: M(c,o) =\{call\} \Rightarrow M(p,o) = \{call\}$.
\begin{theorem}
	There exists an algorithm which outputs whether $T$ is permission respecting regarding $M$ and privilege escalation free over $M$ which halts after $\bigO(|E|\cdot|O|)$ read accesses to $M$.
\end{theorem}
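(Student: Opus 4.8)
The plan is to exhibit a direct checking algorithm that makes a single pass over the edge set $E$, treating each edge as an ordered pair $(p,c)$ of parent and child, and performs for every edge exactly the tests dictated by the two definitions. I would maintain two Boolean flags $\mathit{pr}$ and $\mathit{pef}$, both initialised to true, and output their conjunction after the pass.

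For each edge $(p,c) \in E$ the algorithm does two things. To test \emph{permission respecting}, it first checks whether $c \in O$ --- a membership test on the input structure, not a read of $M$ --- and, if so, reads $M(p,c)$ and sets $\mathit{pr}$ to false unless $M(p,c) = \{call\}$; this costs at most one read access to $M$ per edge. To test \emph{privilege escalation free}, I would use the simplified form noted after its definition: for every $o \in O$ the algorithm reads $M(c,o)$ and $M(p,o)$ and sets $\mathit{pef}$ to false whenever $M(c,o) = \{call\}$ while $M(p,o) = \emptyset$; this costs $2|O|$ read accesses per edge.

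Summing over all edges gives at most $|E|\,(1 + 2|O|) = \bigO(|E|\cdot|O|)$ read accesses to $M$, and since $E$ and $O$ are finite the algorithm halts. Correctness is immediate from the construction: the pass sets $\mathit{pr}$ to false exactly when some edge $(p,o)$ with $o \in O$ violates $M(p,o) = \{call\}$, and it sets $\mathit{pef}$ to false exactly when some edge $(p,c)$ and object $o$ violate the implication $M(c,o) = \{call\} \Rightarrow M(p,o) = \{call\}$, so the final conjunction is true iff $T$ is both permission respecting and privilege escalation free.

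I do not expect a deep obstacle, since the algorithm merely transcribes the quantified conditions into two nested loops. The only point needing care is the accounting of read accesses: I would make explicit that membership tests such as $c \in O$ are charged to the input representation rather than to $M$, and that invoking the post-definition simplification (each entry of $M$ is either $\{call\}$ or $\emptyset$) is what reduces the escalation check to a single implication with two reads instead of a general subset comparison $M(p,o) \supseteq M(c,o)$. One could further cache reads per vertex to avoid reading $M(c,\cdot)$ twice, but this is unnecessary for the claimed bound.
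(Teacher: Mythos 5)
Your proposal is correct and follows essentially the same approach as the paper: a single pass over the edges performing at most $1 + 2|O|$ reads of $M$ per edge, yielding the $\bigO(|E|\cdot|O|)$ bound. The only cosmetic difference is that you accumulate Boolean flags and return their conjunction at the end, whereas the paper's algorithm returns \texttt{false} immediately upon finding a violating edge; this does not affect correctness or the stated bound.
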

\begin{proof}
\Cref{alg:permissioncheck} has the required properties.
The algorithm is correct, since for a permission respecting and privilege escalation free tree the conditions in line~\cref{algline:if1} and~\cref{algline:if2} are by definition never fulfilled and therefore returns \texttt{true}.
On the other hand, if a tree does not hold both of those characteristics, there is an edge where one of the condition fails and the algorithm returns \texttt{false}.
For every edge, the matrix needs to be accessed $t \leq 1 + 2|O|$ times.
This results in a runtime of $\bigO(|E| (1 + 2|O|)) = \bigO(|E|\cdot |O|)$.
\begin{algorithm}[tb]\footnotesize
        \SetAlgoLined%
				\caption{Permission Check}\label{alg:permissioncheck}
    \KwIn{Tree $T = (V,E)$, permission matrix $M$ over a program $\Pi$ with $V \subseteq F$ and $O \subseteq F$}
		\KwOut{Is $T$ permission respecting regarding $M$ and privilege escalation free over $M$?}
        \ForEach{$(p,c) \in E$}{
					\If{$c \in O$}{
						\If{$M(p,c) \neq \{call\}$\label{algline:if1}}{
							\Return{false} \tcp*[f]{not permission respecting}}
						}
					\ForEach{$o \in O$}{
						\If{$M(p,o) \subsetneq M(c,o)$\label{algline:if2}}{
							\Return{false \tcp*[f]{privilege escalation found}}
						}
					}
        }
				\Return{true}
\end{algorithm}
\end{proof}

In~\cref{implementing}, we show that the clever usage of types can lead to the indirect execution of this algorithm inside the type checker of a strongly typed language.

\subsection{Adapting Capabilities}\label{capa?}
\emph{Access control lists (ACL)} and \emph{capabilities}~\cite{lampson1971protection} are the most common implementations of access matrices in modern systems~\cite[p. 635]{eckert2023sicherheit}.
The main difference between the two models is where the permissions are stored.
With ACL, for every subject~$s$ a set of access rights~$M(s,o)$ is saved beside an object~$o$.
The capability model implements this the other way around:
every subject~$s$ has a set of capabilities holding the access rights~$M (s,o)$ for every object~$o$.

Miller et al.~\cite{Myths} compare different approaches to implement such ACL and capability systems.
For our use case, only a few criteria to categorize different capability systems are important.
Especially, it is unimportant to dynamically change the permission matrix.
ACL are unsuitable for our use case, since our objects (the system resources) exist independently and on a lower abstraction layer (the OS or the compiler) than the functions, which change depending on the program.
Furthermore, we need our framework to not permit ambient authority, which is \textquote{authority that is exercised, but not [explicitly] selected by its user}~\cite{Myths}.
Ambient authorities would subvert the goal of our model to make the access to I/O-resources explicit.
In addition, such authorities would require additional data structures, where the permissions are saved and implicitly queried when calling a function restricted by the permission model.

Consequentially, we have two possible techniques to implement a capability-system to mitigate supply chain attacks: the \emph{capability-as-keys} model and \emph{object capabilities}.
In the \emph{capability-as-keys} model, the subject needs to provide a correct key (or token) to access an object.
The keys need to be unforgeable, copy-able and only access the specific resource it was designed for.
Permission to unlock an object can only be obtained from another entity which already had the corresponding key (or by being the special \emph{root}-subject, which holds all keys from the start).
In contrast to doors in the real world, keys only open doors for one-time entry and for the holder of the key only.
\emph{Object capabilities} are similar, but they cut out the middleman by connecting authority with designation directly.
Specific for our example that would mean, that the structures in the standard library which represent, e.g., file-descriptors or network-sockets need to have the same properties as keys in the capabilities-as-keys model.
Since the token-based approach is generally less disruptive to the common workflow used when developing with Rust, we focus on an implementation using this model.

We now show that safe Rust fulfills the criteria for a capability-aware programming language~\cite{E}.
The first property a language must fulfill is memory safety.
Without this feature, a token would be forgeable, meaning that keys could be construed at will.
In a memory unsafe language any line of code can call the constructor of any given key or copy the key from another place in memory.
Since the borrow checker used in Rust ensures memory safety, this criterion is fulfilled.
The second criteria for a capability-system is encapsulation.
It means that \textquote{you cannot reach inside an object for its instance variables}~\cite{E}.
This is mostly important, if we use object capabilities and can be achieved using visibility in Rust.
For a system to support capabilities, we need to restrict global variables in a way which ensures that authority can only be obtained explicitly.
This means, that keys cannot be saved in static variables since they could be used to distribute tokens to unauthorized players.
The general way to do this is to only allow immutable global state, which is not controversial in modern development, since global variables have been considered harmful since 1973~\cite{wulf1973global}.
For this reason, they are also heavily restricted in safe Rust.
Therefore, as long as libraries are not using unsafe Rust, the requirements are fullfilled.

\section{Proof of Concept}\label{implementing}
In this section, we outline a proof-of-concept implementation of a capability-based permission system based on Rusts type system, which we call \emph{PermRust}.
We first focus on realizing a way to label functions, which communicate the I/O-operations the function uses under any condition.
Secondly, we sketch how namespaces can be imported with annotated permissions.
Finally, we bring both together and describe, how we can make sure, that functions can only be called if the namespace is imported with the necessary permissions.

\subsection{Labeling Functions with Permissions in Rust}\label{sec:impl_instr}
Our first goal is to label Rust functions with the permissions they need.
As discussed, we will use the capability-as-key model for this and require every I/O-accessing function to take corresponding tokens (or \emph{keys}) as arguments.
The code that implements a token corresponding to read permission is shown in~\cref{lst:token}.
We call such types \emph{token types}.
Since token types use no runtime memory, they are a form of \emph{zero-cost abstraction}, which means that the permission check is entirely done at compile time, introducing no runtime costs.
This is possible because Rust is a strongly typed language, where a missing or wrong type leads to an error at compile time.
A key property in capability models is that they cannot be constructed at will.
We use Rusts visibility feature to ensure that objects with a token type cannot be constructed outside its namespace (called modules in Rust).
The creation of a struct requires, that all fields of the struct are public.
As such, trying to construct a token type outside the token module, as depicted in~\cref{lst:token_creation_error}, fails with an error message.
In a full-fledged implementation of PermRust the \texttt{token} module would be a part of the standard library and contain multiple different token types, since every library developer would need to use these types to write a function which access I/O.

\begin{figure}[t]
	\noindent\begin{minipage}[t]{.45\textwidth}
	\lstinputlisting[language=Rust,frame=tlrb, caption={A token type in Rust. A struct containing a single field holding an empty tuple.}, label={lst:token}]{code/token.rs}
	\lstinputlisting[language=Rust,frame=tlrb, caption={Unsuccessful creation of token types yields an error message because the field is private.}, label={lst:token_creation_error}]{code/token_creation_error.rs}
\begin{lstlisting}[language=Rust,frame=tlrb,caption={A library function signature with read access to the file system.},label={lst:libFunction}]
/// # Permissions
/// - "ReadPerm"
pub fn read_something(
    f: std::fs::File,
    read_token: &token::ReadPerm,
)
\end{lstlisting}
	\end{minipage}\hfill
	\begin{minipage}[t]{.45\textwidth}
	\lstinputlisting[language=Rust, frame=tlrb, caption={A Rust implementation of the standard \texttt{read}-function requiring a correct token.}, label={lst:proxy}]{code/file_read.rs}
\begin{lstlisting}[language=Rust,frame=tlrb,caption={Usage of \texttt{lib\_func} to generate a function requesting tokens.},label={lst:useAddArguments}]
#[lib_func("ReadPerm")]
pub fn read_something(
    f: std::fs::File,
)
\end{lstlisting}
	\end{minipage}
\end{figure}

Next, we ensure that the structs with the correct token types are actually required to perform the corresponding I/O-operation.
The only way to do I/O-operations in safe Rust is via calls to the standard library, which we therefore have to modify.
A possible solution for read permissions which puts itself in front of the original standard library is shown in~\cref{lst:proxy}.
In Rust, one interface of the standard library to perform a read operation is to call the \texttt{read()} method from a \texttt{std::fs::File} object,
which represent a file descriptor.
\Cref{lst:proxy} uses the new type pattern to introduce a proxy type that allows us to write a new interface for the \texttt{File} type without the need to change to original.
The read function of the \texttt{FileProxy} struct takes the same arguments as the original read function plus a \texttt{ReadPerm} token.
Since the only purpose of the token is to ensure that the caller possesses such a token, it is not used by the function.
The other arguments and the return value of the underlying \texttt{File} object's read function are unmodified.
Because of Rust's \emph{zero-cost abstraction} there is no runtime penalty for this rewrite.

Since the standard library now expects the correct token to be called and there is no way to generate these tokens, library-functions need to require them from their caller.
\cref{lst:libFunction} shows the signature of such a function, which requires a \texttt{ReadPerm} token.
It is also annotated with special comments, which the tool \texttt{rustdoc} can use to generate documentation for Rust projects.
In order to make the required permission even more apparent, we suggest auto generating them using Rust's procedural macros.
Such a macro named \texttt{lib\_func} would allow the developer to write to code in~\cref{lst:useAddArguments} to generate~\cref{lst:libFunction}.

Together with the modified standard library, it is ensured that no function can perform any I/O operation in safe Rust without having the correct access tokens.

\subsection{Permission-aware Importing in Rust}\label{impl_permAware}
While our implementation forces library developers to annotate their functions with permissions, it is not clear which and when tokens are initially generated.
We suggest setting the permissions of packages in the \texttt{Cargo.toml} file, where dependencies are listed in Rust.
Furthermore, we build special \texttt{app\_} functions, which work as entry points for app developers.
These entry points should only work if the correct permission for their packages are set in \texttt{Cargo.toml}.
The construction of the \texttt{app\_} functions can be automated using macros.
An implementation can be found in our git repository~\cite{gitrepo}.

As a first step to make Rust permission-aware, we suggest that all packages have special kinds of \texttt{features}\footnote{%
	The \texttt{feature} mechanism of Cargo and Rust is the way conditional compilation is implemented in Rust.
} called \emph{permissions features}, which represent access rights to I/O operations.
These should align with the permissions represented by the token types.
This can be used, to prohibit the compilation of functions executing more I/O operations as desired by the developer.
We can also use the conditions to create tokens when the corresponding permission-features are enabled.

\Cref{lst:app_libFunction_with_bundle} shows an \texttt{app\_} function, which internally generates the correct tokens required to do I/O-operations.
It prevents the developer from using a token with more permissions than defined in \texttt{Cargo.toml}.
The code in Lines~2--13 introduces a new token called \texttt{localPerm}, which should be usable as a token type depending on the permission set.

\lstinputlisting[language=Rust,float, %
caption={A new interface to \texttt{read\_something}, which does not require the generation of tokens.
Line 2--13 generates a local token which can be used as \texttt{ReadPerm}.},
label={lst:app_libFunction_with_bundle}]{code/app_read_something_with_bundle_perms.rs}

\subsection{Implementation}
We now combine the functionality in a way that only one macro is needed.
We will call this macro \texttt{permissions}. an implementation can be found in our git repository~\cite{gitrepo}.
\texttt{permissions} is intended to be used on functions such as \texttt{read\_something} in~\cref{lst:useAddArguments}.
These functions directly or indirectly perform some kind of I/O-operation and do not have tokens as inputs.
We therefore generate two functions with \texttt{permissions}.
The first function adds the token-inputs and documentation (as in~\texttt{lib\_func}).
This function should only be compiled if the correct permission-features are set.%
\footnote{Conceptionally this is not necessary, since the function can only be called by a function, which has the correct tokens.
However, adding the conditional check can make the executable smaller.
}

The other function, which should be generated, is the \texttt{app\_read\_something} function~(\cref{lst:app_libFunction_with_bundle}).
However, since cargo-features are not transitive, a dependency which is pulled without \texttt{ReadPerm} can, for example, still rely on another package with \texttt{ReadPerm} enabled.
As such it could use the \texttt{app\_} functions of its dependency, leading to privilege escalation.
By introducing a new feature call \emph{direct-dependency}, which is only set for packages that are direct dependencies of the project, we can prevent this scenario.
Currently, Cargo does not implement the functionality to automatically flag direct dependencies and to ensure, that the feature is not set by other dependencies.
However, since Cargo constructs the whole dependency-tree it is reasonable to assume, that such a feature could be implemented.

We can now construct the fully functional \texttt{permission} macro.
The steps necessary for the macro are: cloning to function, prefixing the cloned function name with \texttt{app\_}, adding the call to the original function, adding the conditional compilation arguments.

\section{Discussion and Limitations}\label{chap:evaluation}
In the following, we analyze and discuss the limitations of PermRust
with respect to cost, permission granularity, feature unification, customization and attacker models.
We furthermore provide ideas on how these limitations could be addressed in the future.

\givespace\paragraph{Costs}\label{costs}
Since the token types occupy zero memory per definition, they do not exist at runtime and therefore do not extend the runtime cost of the application.
However, the development-cost of third-party libraries increase slightly, since every function with I/O-access needs to be correctly annotated by the developer and tokens need to be provided when calling such a function.
The \texttt{permission} macro should minimize the workload, making it possible to write function in PermRust almost exactly as in Rust.
Using the macro we have to pay a high price in respect to compile time, since type-checking and executing of macro code is done by the compiler.
Because the security of the whole system stems from the fact that the standard library is a trust anchor, language maintainers need to spend more time on designing and implementing the API in a capability-compatible way.
For this reason, the probability that mainline Rust will implement the discussed permission model is rather small.

\givespace\paragraph{Granularity}\label{granularity}
To ensure permission-aware importing, we matched the permissions required from a function with the permission set for a package via the constructions of \texttt{app\_} functions.
While this makes sense from the perspective of the developers of the library, a more granular approach could be more usable.
The permission would then not be defined in the \texttt{Cargo.toml} file but when importing different paths in the source code using Rusts \texttt{use} keyword.

\paragraph{Feature Unification}\label{featureuni}
Feature unification is a feature of cargo, which is used when a package is present multiple times in the dependency tree of a project~\cite{cargo}.
This can happen if for example a project has the direct dependencies~\texttt{LibA} and \texttt{LibB} and \texttt{LibB} also depends on \texttt{LibA}.
Since all dependencies can be imported with different features, cargo needs to decide which features it should set for \texttt{LibA}.
The solution of the package manager is to calculate the union of the desired features and compile \texttt{LibA} with them.

PermRust relies on the property, that the \textquote{direct-dependency} feature is only set for direct dependencies.
Regardless, since \texttt{LibA} is now a direct- \emph{and} indirect-dependency,
the feature will be activated and enable \texttt{LibB} to call the \texttt{app\_} functions of \texttt{LibA}.
This could lead to privilege escalation, because \texttt{LibB} can call the functions regardless of its permissions.
The real-world effect of this vulnerability is unclear, since the author of \texttt{LibB} cannot make assumptions on the dependencies of other projects.
Furthermore, it is unlikely that the developer of \texttt{LibB} would use an \texttt{app\_} version by accident.

\givespace\paragraph{Customization}\label{Customizability}
PermRust is tightly connected with the standard library and therefore with the system calls of the underlying system.
This logically leads to the restriction, that permissions can only be given in full and not customized regarding concepts not known to the underlying system.
For example, it would be useful to be able to give access rights only for certain paths of the file system or to allow TCP-connections only to a specific IP-range.
While different standard library functions for common customization could be created, they would need to check the custom conditions at runtime.

A full object capability system as described in~\cref{sec:methodology}, could allow custom permissions.
In contrast to the capability-as-keys model, object capabilities are not tokens which allow access to resources but the objects representing the resources themselves.
For example library functions interacting with the file system would need to get a \texttt{File} object from its caller, and could only interact with the file represented by that object.
This object would need to be made unforgeable, meaning that the \texttt{open} and \texttt{create} methods could only be executed where token types are constructed in PermRust.
This would of course interrupt the development flow of common Rust developers even more, than the capability-as-keys model.
Since in this system, all resources of a program would need to be obtained at the beginning of the program, startup times would increase rapidly.

\givespace\paragraph{Attacker Models}\label{Attacker Models}
It is clear that all security mechanism of PermRust can easily be circumvented by using unsafe Rust to construct an arbitrary token type at will.
Despite that, PermRust can still be useful to mitigate real supply chain attacks.
Specifically the ones which stem from badly designed libraries.
Such libraries provide functions which (in some corner-cases) access system resources without making the access obvious to the application developer.
An example of such an attack is Log4Shell, where a popular logging library, in some cases, contacted a remote server leading to remote code execution. A description of the attack can be found in~\cite{chowdhury2022better}.

\section{Conclusion}

In this work, we addressed the principles of least privilege in the context of library usage in modern software development.
Our proof of concept involves a framework that clearly defines I/O-operations
and binds permissions to libraries, limiting function execution.
Using access matrices and capability systems,
we developed a capability-secure programming language called \emph{PermRust}
using Rust's procedural macros.
PermRust employs a capability-as-key model,
restricting system I/O access to token owners.
We demonstrated the feasibility of this model in Rust
by providing a proof of concept.

\bibliographystyle{splncs04}
\bibliography{bib.bib}

\end{document}